\def\a{\alpha}  \def\d{\delta} \def\D{\Delta}
\def\e{\epsilon}    \def\g{\gamma}
  \def\k{\kappa}
 \def\th{\theta}    
  \def\n{\nu} 
 \def\om{\omega}  
\def\E{\mbox{{\bf E}}}
\def\Pr{\mbox{{\bf Pr}}}
\def\uar{{\bf uar}}
\def\cG{{\cal G}}
\newcommand{\ignore}[1]{}
\newcommand{\brac}[1]{\left(#1\right)}
\newcommand{\bfrac}[2]{\left(\frac{#1}{#2}\right)}
\newtheorem{theorem}{Theorem}
\newtheorem{lemma}[theorem]{Lemma}
\newtheorem{proposition}[theorem]{Proposition}
\newtheorem{corollary}[theorem]{Corollary}
\newtheorem{definition}{Definition}
\journal{Discrete Applied Mathematics}
\begin{document}

\begin{frontmatter}



\title{Global Majority Consensus by Local Majority Polling on Graphs of a Given Degree Sequence}


\author[BU]{Mohammed Amin Abdullah}
\ead{email.mohammed@gmail.com}

\author[IC]{Moez Draief} 
\ead{m.draief@imperial.ac.uk}

\address[BU]{School of Mathematics, University of Birmingham, Edgbaston, Birmingham B15 2TT, United Kingdom}
\address[IC]{Department of Electrical and Electronic Engineering, Imperial College London, London SW7 2AZ, United Kingdom}

\begin{abstract}
Suppose in a graph $G$ vertices can be either red or blue. Let $k$ be odd. At each time step, each vertex $v$ in $G$ polls $k$ random neighbours and takes the majority colour. If it doesn't have $k$ neighbours, it simply polls all of them, or all less one if the degree of $v$ is even. We study this protocol on graphs of a given degree sequence, in the following setting: initially each vertex of $G$ is red independently with probability $\alpha < \frac{1}{2}$, and is otherwise blue. We show that if $\alpha$ is sufficiently biased, then with high probability consensus is reached on the initial global majority within $O(\log_k \log_k n)$ steps if $5 \leq k \leq d$, and $O(\log_d \log_d n)$ steps if $k > d$. Here, $d\geq 5$ is the effective minimum degree, the  smallest integer which occurs $\Theta(n)$ times in the degree sequence.  We further show that on such graphs, any local protocol in which a vertex does not change colour if all its neighbours have that same colour, takes time at least $\Omega(\log_d \log_d n)$, with high probability.  Additionally,  we demonstrate how the technique for the above sparse graphs can be applied in a straightforward manner to get bounds for the Erd\H{o}s-R\'enyi random graphs in the connected regime.
\end{abstract}

\begin{keyword}
Local majority \sep consensus  \sep social networks \sep distributed computing


\end{keyword}

\end{frontmatter}

\section{Introduction}
Let  $G=(V,E)$ be a graph where each of the $n=|V|$ vertices maintains an opinion,  which we will speak of in terms of two colours - red and blue. We are interested in distributed protocols on $G$ that can bring about consensus to a single opinion. We also desire that the opinion that was the initial majority is certain or highly likely to be the consensus reached. We make no assumptions about the properties of the colours/opinions except that vertices can distinguish between them. 

The \emph{local majority} protocol in a synchronous discrete time setting does the following: At each time step, each vertex $v$ polls all its neighbours and assumes the majority colour in the next time step. We study a generalisation of this, which we call the \emph{$k$-choice local majority protocol} where a vertex polls a random subset of $k$ neighbours, or the largest odd number it has if it doesn't have $k$. We can retrieve the local majority protocol by setting $k$ to be the maximum degree, for example. 

This can be motivated by both a prescriptive and a descriptive view. In the former, as a consensus protocol, it can be seen as a distributed co-ordination mechanism for networked systems. In the latter, it can be seen as a natural process occurring, for example in social networks where it may represent the spread of influence.

One of the simplest and most widely studied distributed consensus algorithms is the \emph{voter model} (see, e.g., \cite[ch.~14]{AlFi}). In the discrete time setting, at each time step $t$, each vertex chooses a single neighbour uniformly at random (\uar) and assumes its opinion. Thus, it can be seen as an extreme case of the $k$-choice local majority protocol, with $k=1$. The number of different opinions in the system is clearly non-increasing, and consensus is reached almost surely in finite, non-bipartite, connected graphs. Using an elegant martingale argument, \cite{Peleg} determined the probability of consensus to a particular colour. In our context this would be the sum of the degrees of vertices which started with that colour, as a fraction of the sum of degrees over all vertices. Thus, on regular graphs, for example, if the initial proportion of reds is a constant $\alpha$, the probability of a red consensus is $\alpha$. This probability is increased on non-regular graphs if the minority is ``privileged'' by sitting on high degree vertices (as in say, for example, the small proportion of high degree vertices  in a graph with power-law distribution). This motivates an alternative where the majority is certain, or highly likely, to win. 

In addition to which colour dominates, one is also interested in how long it takes to reach consensus. In the voter model, there is a duality between the voting process and multiple random walks on the graph. The time it takes for a single opinion to emerge is the same as the time it takes for $n$ independent random walks - one starting at each vertex - to coalesce into a single walk, where two or more random walks coalesce if they are on the same vertex at the same time. Thus, consensus time can be determined by studying this multiple walk process. However, the analyses of local-majority-type protocols have not been readily amenable to the established techniques for the voter model, namely, martingales and coalescing random walks. Martingales have proved elusive and the random walks duality does not readily transfer, nor is there an obvious way of altering the walks appropriately. Thus, ad-hoc techniques and approaches have been developed. 

\subsection{Related work and comparisons to our own}
In \cite{Cruise} a variant of local majority  is studied where a vertex contacts $m$ others and if $d$ of them have the same colour, the vertex subsequently assumes this colour.  They demonstrate convergence time of $O(\log n)$ and error probability -- the probability of converging on the initial minority -- decaying exponentially with $n$. However, the analysis is done only for the complete graph; our analysis of sparse graphs is a crucial difference, because the techniques employed for complete graphs do not carry through to sparse graphs, nor are they easily adapted. The error probability we give is not as strong but still strong, nevertheless. Furthermore, the convergence time we give is much smaller. 

\cite{Mossel} and \cite{Montanari}  look at local majority in different settings. In \cite{Mossel}, for $d$-regular $\lambda$-expanders they show that when there is a sufficiently large bias, there is convergence to initial majority. For the special case of $d$-regular random graphs, they show a bias equivalent to our $\alpha=1/2-1/\sqrt{d}$ is sufficient for convergence to the initial majority. Our bias condition is more demanding, i.e., whenever our condition is satisfied, theirs is too. However, their results are limited to regular graphs and they do not address timings. In \cite{Montanari} the initial setting is the same as ours; each vertex is independently red with some probability, and they analyse the bias required for almost sure convergence to the majority. However, they  study the process only on infinite regular trees.

In \cite{ColinVoting} the authors study the following protocol on random regular graphs and regular expanders:  Each vertex picks two neighbours at random, and takes the majority of these with itself. They show convergence to the initial majority in $O(\log n)$ steps with high probability, subject to sufficiently large initial bias and high enough vertex degree.

Further afield, protocols that converge to initial global majority but which are not based on local majority rules are given in \cite{Benezit} and \cite{Milan}. Whilst \cite{Benezit} applies to any connected graph, and converges to the initial global majority almost surely, the convergence time bound given in \cite{Shang} (the best general bound currently available for this protocol) is rather high, $O(n^4\log n)$. In contrast, \cite{Milan}, like \cite{Cruise}, gives a protocol which is $O(\log n)$ time and has  exponentially small error probability. However, like \cite{Cruise}, their analysis is dependent on the graph being complete.

As far as we know, our analysis is the first to demonstrate (sub)logarithmic, distributed, consensus on the initial majority with high probability (i.e., with probability tending to $1$ as $n \rightarrow \infty$),  on sparse graphs. 

\subsection{Outline of paper and some notation}
This paper concerns two graph models: The space of graphs given by a specified degree sequence, and Erd\H{o}s-R\'enyi random graphs. The former will be elaborated upon in the next section, with both definitions and structural lemmas. In section \ref{GGDS} we will give the main theorems for how the $k$-choice local majority protocol acts for this class of graphs. Subsequently, in section \ref{ERRG} we will do the same for Erd\H{o}s-R\'enyi random graphs. Finally, in section \ref{Concs}, we give a conclusion and discuss avenues for further investigation. 

For a vertex $v$ denote by $N(v)$ the neighbours of $v$ in $G$ and let $d(v)=d_v=|N(v)|$. In this paper the $\log$ function is to base $e$ when no base is stated.
  
We will denote by $G \in \mathcal{G}(n,p)$ the Erd\H{o}s-R\'enyi random graph with $n$ vertices and edge probability $p$.

For a graph $G$ and a vertex $v \in G$, let $G[v,s]$ denote the subgraph induced by the set of
vertices within a distance $s$ of $v$.

\section{Graphs of a Given Degree Sequence: Graph Model}
Let $V=[n]$ be a vertex set and define $\mathcal{G}_n(\textbf{d})$ to be the set of connected simple graphs with degree sequence $\textbf{d} =(d_1,d_2,\ldots,d_n)$, where $d_i$ is the degree of vertex $i \in V$. Clearly,  restrictions on degree sequences are required in order for the model to make sense. 
An obvious one is that the sum of the degrees in the sequence cannot be odd. Even then, not all degree sequences are \emph{graphical} and not all graphical 
sequences can produce simple graphs. Take for example the two vertices $v$ and $w$ where $d_v=3$ and $d_w=1$. In order to study this model, we restrict 
the degree sequences to those which are \emph{nice} and graphs which have nice degree sequences are termed the same. The precise definition will be given in section \ref{seca}, but basically, nice graphs are sparse, with not too many high degree vertices. They also have the property that there is a constant $0<\kappa \leq 1$ and an integer $d$ (that may grow with $n$) that occurs $\kappa n +o(n)$ times in \textbf{d}, and any integer smaller than $d$ occurs $o(n)$ times. We call $d$ the \emph{effective minimum degree}. It may be assumed where it appears below that $\textbf{d}$ is nice. 

Our analysis also requires that graphs $G$ taken from $\mathcal{G}_n(\textbf{d})$ have certain structural properties. The subset of graphs $\mathcal{G}'_n(\textbf{d})$ having these properties form a large proportion of $\mathcal{G}_n(\textbf{d})$, in fact, $|\mathcal{G}'_n(\textbf{d})|/|\mathcal{G}_n(\textbf{d})|=1-n^{-\Omega(1)}$ when $\textbf{d}$ is nice (\cite{CovDS}). We term graphs in $\mathcal{G}'_n(\textbf{d})$ for $\textbf{d}$ nice as \emph{typical}. 
Further details will be given in section \ref{cfgmod}.

For the special case of $d$-regular graphs $G$ with $d \geq 5$ constant, we can give stronger bounds on the error probability at the expense of time. To do so we require that $G$  be \emph{typical regular}, which almost all connected simple $d$-regular graphs are. More precisely, if $\mathcal{G}_n(d)$ is the set of all connected simple $d$-regular graphs on $n$ vertices, and $\mathcal{G}'_n(d) \subseteq \mathcal{G}_n(d)$ is the subset of typical regular graphs, then $|\mathcal{G}'_n(d)|/|\mathcal{G}_n(d)| \rightarrow 1$ as $n \rightarrow \infty$, as shown in \cite{CFR-Mult}.

Although this model is typically framed as a random graph, randomness here is superfluous. We assume that the graph $G$ that the protocol acts upon is from the typical subset $\mathcal{G}'_n(\textbf{d})$ of the set $\mathcal{G}_n(\textbf{d})$ of simple graphs with nice degree sequence \textbf{d}. As long as $G$ has the typical properties, the time and error probability bounds will hold. The fact that the typical subset is almost the same size as the general set is demonstrated via the configuration model. See \cite{CovDS} for a detailed explanation. 

\subsection{Assumptions about the degree sequence} \label{seca}
Let $V_j = \{i \in V: d_i = j\}$ and let $n_j = |V_j|$.
Let $\sum_{i=1}^{n}d_i = 2m$ and let
$\th=2m/n$ be the average degree. We use the notations $d_i$ and $d(i)$ for the degree of vertex $i$.

Let $0<\k\le 1$ be constant, $0<c<1/8$ be  constant and let  $\g = (\sqrt{\log n}/\th)^{1/3}$.
We suppose the degree sequence \textbf{d} satisfies
the following conditions:
\begin{description}
\item[(i)] Average degree $\th=o(\sqrt{\log n})$.
\item[(ii)]Minimum degree $\delta \geq 3$.
\item[(iii)] Let $d\geq 5$ be such that $n_d=\k n+o(n)$. We call $d$ the {\em effective minimum degree}. 
\item[(iv)]  Number of little vertices {$\displaystyle\sum_{j= \delta}^{d-1} n_j=O(n^{c(d-1)/d})$}; a vertex $v$ is \emph{little} if $d(v)\leq d-1$.
\item[(v)] Maximum degree $\Delta =O(n^{c(d-1)/d})$.
\item[(vi)] Upper tail size {$\displaystyle \sum_{j=\g\th}^\D n_j=O(\D)$}.
\end{description}

Any degree sequence with constant maximum degree, and for which $d=\d$ is nice.
The conditions hold in particular, for $d$-regular graphs, $d \ge 5, \;d=\d=o(\sqrt{\log n})$, as condition (iii)
holds with $\k=1$. The spaces of graphs we consider are somewhat more general.
The condition nice allows, for example,
bi-regular graphs where half the vertices are degree $d \ge 5$ and half of degree $\Delta=o(\sqrt{\log n})$. 

Condition (ii) ensures connectivity with high probability. Conditions (i), (iv), (v) and (vi) allow the structural properties in Lemma \ref{lightCPC} to be inferred via the configuration model, as was done in \cite{CovDS}. The effective minimum degree condition (iii),  ensures that some entry in the degree sequence occurs order $n$ times, and our analysis requires $d \geq 5$. In the conclusion, we shall discuss relaxing this condition. 
\vspace{5 mm}
\begin{definition}
A {\em nice} degree sequence \textbf{d} is one that satisfies conditions (i)--(vi) above, and we apply the
same adjective to any graph $G \in \mathcal{G}_n(\textbf{d})$ with a nice \textbf{d}.
\end{definition}

\subsection{Structural properties of $G$}\label{cfgmod}
Let $C$ be a large constant, and let
\begin{equation}\label{om}
\om=\om(n)=C\log\log n.
\end{equation}
A cycle or path  is \emph{small},
if it has at most $2\om+1$ vertices, otherwise it is \emph{large}.
Let
\begin{equation}\label{ell}
\ell =\ell(n)=B \log^2 n
\end{equation}
for some large constant $B$.
A vertex $v$ is \emph{light} if it has degree at most $\ell$, otherwise it is \emph{heavy}.
A cycle or path is \emph{light} if all  vertices are light.

Lemmas \ref{lightCPC} and \ref{dltlLB} are from \cite{CovDS}.
\vspace{5 mm}
\begin{lemma}[\cite{CovDS}]\label{lightCPC}
Let $\textbf{d}$ be a nice degree sequence and let $G$ be chosen \uar\ from 
\(\cG_n(\textbf{d})\). With probability $1-O(n^{-1/4})$,
\begin{description}
\item[(a)] No pair of vertex disjoint small light cycles is joined by a small light path.
\item[(b)] No  light vertex is in two small light cycles.
\item[(c)] No  small  cycle contains a heavy vertex or little vertex, or is connected to a heavy
or little vertex by a small path.
\item[(d)] No pair of little or heavy vertices is connected by a small  path.
\end{description}
\end{lemma}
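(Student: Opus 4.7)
The plan is to work in Bollob\'as's configuration model $\mathcal{C}_n(\textbf{d})$: each vertex $i$ is assigned $d_i$ half-edges, the $2m$ half-edges are matched uniformly at random, and the conditional law of the resulting multigraph given that it is simple is the uniform measure on $\mathcal{G}_n(\textbf{d})$. A second-moment computation using condition (vi) to control the upper tail gives $\sum_i d_i^{2} = o(n\log n)$, so the expected numbers of loops and multi-edges in the pairing are each $o(\sqrt{\log n})$, and $\Pr(\text{simple}) \ge n^{-o(1)}$. It therefore suffices to show that each of the four forbidden substructures has expected count $n^{-1/4-\Omega(1)}$ in the pairing model; conditioning on simplicity inflates the bound by at most $n^{o(1)}$, which still beats the $O(n^{-1/4})$ target.

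For a given substructure shape $S$ on $s$ labelled vertices with $e$ edges, the probability that a fixed placement of $S$ appears in the random matching is
\[
\frac{\prod_{v\in S} d_v(d_v-1)\cdots(d_v-\deg_S(v)+1)}{(2m-1)(2m-3)\cdots(2m-2e+1)} \;\le\; \frac{\prod_{v\in S} d_v^{\deg_S(v)}}{(n\theta)^{e}}\,(1+o(1)),
\]
and the number of placements is at most $n^{s}$ up to a shape-dependent constant. For (a) (two vertex-disjoint light cycles joined by a light path) and (b) (figure-eight of light vertices) the shape satisfies $s = e-1$ and every vertex is light, so $\prod d_v^{\deg_S(v)} \le \ell^{2e}$ with $\ell = B\log^{2} n$; the expected count is at most $n^{e-1}\ell^{2e}/(n\theta)^{e} = \ell^{2e}/(n\theta^{e}) = n^{-1+o(1)}$ for each choice of the length parameters $(k_1,k_2,p)$, and the $(\log\log n)^{O(1)}$ such choices do not spoil it. For (c) and (d), at least one (resp.\ two) distinguished vertex is allowed to be little (degree $\le d-1$, counted by (iv)) or heavy (degree $\le \Delta$, counted by (v) and (vi)); each such vertex contributes at most $O(n^{c(d-1)/d})$ to the vertex-choice factor and at most $\Delta^{2} = O(n^{2c(d-1)/d})$ to the degree factor, while all remaining vertices are treated as light. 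With $c < 1/8$ and $d \ge 5$ the combined tail exponent is at most $4c(d-1)/d < 2/5$ in the worst case (case (d)), which together with the $(n\theta)^{-e}$ denominator leaves an expected count of $n^{-\Omega(1)}$, well below $n^{-1/4}$.

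The main obstacle is the case analysis, particularly for (c), which splits into several non-isomorphic sub-shapes: a small cycle containing a heavy vertex, the same with a little vertex, a cycle attached by a small light path to such a vertex, and so on. Each sub-shape has its own $(s,e)$ and its own placement of the $\Delta^{2}$ degree factor, and one has to check that the combination of (i)~the $n^{c(d-1)/d}$ bound on the count of heavy-or-little vertices and (ii)~the $\Delta = O(n^{c(d-1)/d})$ bound on the maximum degree always leaves a strictly negative exponent after dividing by $(n\theta)^{e}$. The threshold $c < 1/8$ together with $d \ge 5$ is precisely what makes this margin uniform across all sub-cases. Once this bookkeeping is done, the remainder of the argument---Markov's inequality, the union bound over the $O(\log\log n)$ length choices, and the transfer from the pairing model to $\mathcal{G}_n(\textbf{d})$---is mechanical.
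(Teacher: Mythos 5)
The paper does not prove this lemma; it is imported verbatim from \cite{CovDS}, so there is no in-paper argument to compare your write-up against. Your configuration-model first-moment computation is the standard (and, as far as one can tell, the original) route, and the outline is sound: the pairing probability for a fixed placement, the $e=s+1$ versus $e=s$ bookkeeping for the various shapes, the $n^{o(1)}$ cost of raising polylogarithmic degree bounds to an $O(\log\log n)$ power, the union bound over the $(\log\log n)^{O(1)}$ length parameters, and the $n^{o(1)}$ inflation from conditioning on simplicity are exactly the right ingredients, and every case lands comfortably below the $O(n^{-1/4})$ target. Two small inaccuracies that do not affect the conclusion: the expected number of multiple edges is of order $\lambda^{2}=o(\log n)$ rather than $o(\sqrt{\log n})$, which is still enough to give $\Pr(\text{simple})\ge n^{-o(1)}$; and in case (d) the worst-case tail exponent $4c(d-1)/d$ is only bounded by $1/2$ (not $2/5$) once $d$ is large, which still leaves an expected count of $n^{-1/2+o(1)}\ll n^{-1/4}$.
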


Recall for a graph $G$ and a vertex $v \in G$, $G[v,s]$ denotes the subgraph induced by the set of
vertices within a distance $s$ of $v$.

A vertex $v$ is
\emph{$d$-tree-like to depth $h$}   if $G[v,h]$ is a $d$-regular tree, (i.e.  all vertices on levels $0,1,...,h-1$ have degree $d$).
 We choose the following value for $h$, which depends on the average degree $\th$.
\begin{equation}\label{om1}
  h =
\frac{1}{\log d} \log \bfrac{\log n}{(\log\log n) \log \th}
\end{equation}

A vertex $v$ is \emph{$d$-compliant}, if $G[v, \omega]$ is a tree,
and all vertices of $G[v,\omega]$ have degree at least $d$.
A vertex $v$ is
\emph{$d$-tree-regular}, if it is  $d$-tree-like to depth $h$, $d$-compliant to depth $\om$ and all vertices of $G[v,\omega]$ are light.
For such a vertex $v$, the first $h$ levels of the BFS tree,  really are a $d$-regular tree, and the remaining $\om-h$ levels
can be pruned to a $d$-regular tree.

\vspace{5 mm}
\begin{lemma}[\cite{CovDS}]\label{dltlLB}
Let $\textbf{d}$ be a nice degree sequence and let $G$ be chosen \uar\ from 
\(\cG(\textbf{d})\). There exists $\e>0$ constant such that 
with probability $1-O(n^{-\e})$,
\begin{description}
\item[(e)] there are $n^{1-O(1/\log\log n)}$ $d$-tree-regular vertices.
\end{description}
\end{lemma}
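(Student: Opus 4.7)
The plan is to work in the configuration model of $\cG_n(\textbf{d})$, estimate for a fixed vertex $v$ the probability $p$ that $v$ is $d$-tree-regular via a breadth-first half-edge exploration, and then upgrade the first-moment estimate $\E X = np = n^{1-O(1/\log\log n)}$ to a high-probability statement by a second-moment argument. Transfer back to the uniform simple model is by the standard fact that for nice $\textbf{d}$ the configuration is simple with probability $\Theta(1)$ (see \cite{CovDS}).

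For the single-vertex estimate I would run a BFS from $v$, pairing one half-edge at a time. For $v$ to be $d$-tree-regular we need the BFS through depth $\om$ to create no collisions (so the explored subgraph is a tree), to reach only vertices of degree in $[d,\ell]$, and to reach only vertices of degree exactly $d$ through depth $h$. Conditions (i), (v), (vi) imply that the total number of half-edges used through depth $\om$ is $n^{o(1)}$, so the probability of a collision is $n^{-\Omega(1)}$. By (iii) the probability that a freshly paired half-edge lands on a vertex of degree exactly $d$ is $\k d/\th + o(1) = \Theta(1)$; by (vi) the probability of landing on a light vertex is $1-o(1)$. Since a $d$-ary tree of depth $h$ has $\Theta(d^h)$ internal vertices and the chosen $h$ satisfies $d^h = \log n/(\log\log n\cdot \log\th)$, the joint probability over all required pairings is $\exp(-\Theta(\log n/\log\log n)) = n^{-O(1/\log\log n)}$, giving $\E X = n^{1-O(1/\log\log n)}$.

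For concentration I would compute $\E X^2$ by joint BFS from an ordered pair $(u,v)$. Either the two explorations stay disjoint through depth $\om$, in which case conditional independence yields a multiplicative error $(1+o(1))$ in the joint probability, or they meet, which forces the union of their $\om$-neighbourhoods to contain a cycle and hence forces at least one of $u,v$ not to be $d$-tree-regular. Thus $\E X^2 \le (1+o(1))(\E X)^2 + \E X$, and Chebyshev's inequality delivers $X \ge (1-o(1))\E X$ with probability $1 - O(n^{-\Omega(1)})$, which is the desired bound (e).

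The main obstacle, as usual in this style of argument, is the careful book-keeping of conditional pairing probabilities along the BFS: each completed pairing alters the remaining distribution of half-edges, and these perturbations must accumulate to only a $1+o(1)$ factor over the $\Theta(d^h)$ pairings performed. The key ingredients are the upper tail condition (vi), which ensures that almost all half-edges are attached to light vertices, and the maximum-degree bound (v), which ensures that no single vertex consumes a non-negligible fraction of the pool. Together with $\th = o(\sqrt{\log n})$, these give a relative error per pairing of order $n^{o(1)}/m$, and $d^h \cdot n^{o(1)}/m = o(1)$, so the asymptotic $n^{1-O(1/\log\log n)}$ survives intact.
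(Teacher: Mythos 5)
First, note that the paper does not prove this lemma at all --- it is imported verbatim from \cite{CovDS} --- so there is no in-paper argument to compare yours against; I can only assess your sketch on its own terms. The overall architecture (configuration model, BFS half-edge exposure for the first moment, second moment for concentration, transfer to the uniform simple model) is the natural and correct route, and your first-moment computation giving $\E X = n^{1-O(1/\log\log n)}$ is sound: the dominant cost is forcing the $\Theta(d^h)$ internal vertices of the depth-$h$ tree to have degree exactly $d$, at cost $\Theta(d/\th)$ each, and $d^h\log(\th/d)=O(\log n/\log\log n)$ by the choice \eqref{om1}.

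Two steps, however, have genuine gaps as written. (1) The Chebyshev step: from $\E X^2 \le (1+o(1))(\E X)^2 + \E X$ you can only conclude $\mathrm{Var}(X)/(\E X)^2 = o(1)$, hence a failure probability of $o(1)$ --- not the $O(n^{-\e})$ the lemma asserts. To get a polynomially small failure probability you must quantify the $o(1)$: show that the probability that the explorations from $u$ and $v$ meet within depth $\om$ is $n^{-1+o(1)}$ (their union involves only $n^{o(1)}$ half-edges out of $\Theta(n\th)$), and that the accumulated perturbation of conditional pairing probabilities contributes only a factor $1+n^{-1+o(1)}$, so that $\mathrm{Var}(X)=O\brac{n^{-1+o(1)}(\E X)^2}+\E X$ and Chebyshev yields $n^{-\Omega(1)}$. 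This is true, but it is exactly the book-keeping you defer, and it is where the stated probability bound actually comes from. (2) The transfer to the simple model: for nice sequences the average degree $\th$ may tend to infinity (only $\th=o(\sqrt{\log n})$ is assumed) and the maximum degree may be polynomial in $n$, so $\lambda=\sum_i d_i(d_i-1)/(2m)$ need not be $O(1)$ and the configuration multigraph is \emph{not} simple with probability $\Theta(1)$. Using conditions (i), (v), (vi) one gets $\lambda=O(\g\th)=o(\sqrt{\log n})$ and hence $\Pr(\mathrm{simple})=e^{-o(\log n)}=n^{-o(1)}$; this still suffices for the conditioning, but only because the configuration-model failure probability is $n^{-\Omega(1)}$ with a genuinely positive exponent --- which loops back to why gap (1) must be closed. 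Neither gap is fatal, but both must be repaired before the lemma as stated follows.
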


\vspace{5 mm}

\begin{definition}\label{typicalDef}
A \emph{typical} graph $G$ is one that is nice and also satisfies conditions (a)-(e) of Lemmas \ref{lightCPC} and \ref{dltlLB}.
\end{definition}

\vspace{5 mm}

\begin{definition}\label{typicalRegDef}
Let $L_1=\e_1\log_d n$, where $\e_1>0$ is a sufficiently small constant. A \emph{typical regular} graph $G$ of degree $d$ for some constant $d$ is one that is typical and and additionally has the following property: 
\begin{description}
\item[(f)] No pair of cycles $\mathcal{C}_1, \mathcal{C}_2$ with $|\mathcal{C}_1|, |\mathcal{C}_1|\leq 100L_1$
are within distance $100L_1$ of each other.
\end{description}
\end{definition}

From \cite{CFR-Mult} we have the following,
\vspace{5 mm}
\begin{lemma}[\cite{CFR-Mult}]
Let $G$ be chosen \uar\ from $\mathcal{G}_n(d)$, the set of all simple connected $d$-regular graphs. With probability tending to $1$ as $n \rightarrow \infty $, no pair of cycles $\mathcal{C}_1, \mathcal{C}_2$ with $|\mathcal{C}_1|, |\mathcal{C}_1|\leq 100L_1$
are within distance $100L_1$ of each other.
\end{lemma}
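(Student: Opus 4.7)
The plan is to use a standard first-moment argument in the configuration (pairing) model for random $d$-regular graphs, and then transfer to the uniform distribution on simple connected $d$-regular graphs. Since $d\ge 3$ is constant, the configuration model produces a simple graph with probability bounded away from $0$, and connectivity holds \whp as well, so any event whose probability tends to $0$ in the pairing model also tends to $0$ under the uniform law on $\mathcal{G}_n(d)$.

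Call a subgraph $H$ \emph{bad} if it consists of two cycles $\mathcal{C}_1,\mathcal{C}_2$ of lengths $\ell_1,\ell_2\le 100L_1$ together with a (possibly length-zero, possibly sharing vertices/edges with the cycles) path of length $p\le 100L_1$ witnessing that the cycles are within distance $100L_1$. Every violating pair in the statement is certified by such an $H$. The crucial observation is that $H$ has cyclomatic number exactly $2$: whether the two cycles are disjoint and joined by a proper path, share a single vertex, or share an edge path, the numbers of vertices and edges of $H$ satisfy $V = E - 1$.

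I would then estimate the expected number of labelled copies of a fixed bad shape in the pairing model by the standard formula
\[
O\!\left(n^{V}\cdot\prod_{v\in V(H)}(d)_{d_H(v)}\cdot(dn)^{-E}\right)=O\!\left(n^{V-E}(d-1)^{E}\right)=O\!\left(\frac{(d-1)^{E}}{n}\right),
\]
where the $(d-1)$ rather than $d$ comes from the usual refinement counting cycle/path traversals without immediate backtracking, and $E=\ell_1+\ell_2+p$. There are $O(L_1^3)=O(\log^3 n)$ choices of lengths $\ell_1,\ell_2,p\le 100L_1$, so summing yields expected total count
\[
O(\log^3 n)\cdot\frac{(d-1)^{300L_1}}{n}=O(\log^3 n)\cdot n^{300\varepsilon_1\log_d(d-1)\,-\,1}=o(1),
\]
provided $\varepsilon_1$ is chosen small enough that $300\varepsilon_1\log_d(d-1)<1$. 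Markov's inequality and the transfer step from pairing model to uniform simple connected $d$-regular graph complete the argument.

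The main obstacle is the combinatorial case analysis needed to certify that every pair of close short cycles is captured by an $H$ with $V=E-1$: one must treat disjoint cycles joined by a proper path, cycles meeting at one vertex, cycles sharing a path of edges, and the degenerate ``theta'' configurations uniformly. Once this cyclomatic-number-two fact is in hand the calculation is routine, and the saving of a factor of $1/n$ that it produces is exactly what allows the prefactor $\varepsilon_1$ in $L_1=\varepsilon_1\log_d n$ to be chosen so as to beat the exponential growth $(d-1)^{300L_1}$.
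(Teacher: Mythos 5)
Your proposal is correct in substance, but note that the paper does not prove this lemma at all: it is imported verbatim from \cite{CFR-Mult}, and your configuration-model first-moment argument is essentially the standard proof used there (and for the analogous structural statements in \cite{CovDS}): certify each violating pair by a connected witness subgraph of cyclomatic number at least two, bound its expected number of copies in the pairing model by $n^{V-E}d^{O(E)}=n^{-1+O(\varepsilon_1)}=o(1)$, and transfer to uniform simple connected $d$-regular graphs using the fact that for constant $d$ the pairing model is simple with probability bounded away from zero and connected \whp. One small imprecision: the witness $H$ does not always satisfy $V=E-1$ exactly --- if the two cycles intersect in $k\geq 2$ disjoint segments, then $E=V+k$ --- but this only decreases $n^{V-E}$ further, so the union bound over the polynomially many (in $L_1$) shapes and lengths still gives $o(1)$ once $\varepsilon_1$ is small enough that $300\varepsilon_1\log_d(d-1)<1$.
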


Thus, we see that a fraction $1-n^{-\Omega(1)}$ of graphs \(G \in \cG_n(\textbf{d})\) are typical and a fraction $1-o(1)$ of \(G \in \cG_n(d)\), the set of $n$-vertex simple connected $d$-regular graphs, are typical regular. 

\section{Graphs of a Given Degree Sequence: Results}\label{GGDS}

For all the results we give - both lower and upper bounds - we assume the same initialisation of the colours of the vertices: initially, each vertex of $G$ is red independently with probability $\alpha \in (0, \frac{1}{2})$, and is otherwise blue. A simple Chernoff bound argument gives  the following:
\\

\begin{proposition}\label{BlueProp}
For come constant $c>0$, blue is the initial majority with probability at least $1-e^{-cn}$. 
\end{proposition}

\subsection{The lower bound}
\begin{definition}
We call a protocol $\mathcal{P}$ \emph{local-stable} (\emph{LS}) if it has the following properties
\begin{description}
\item[(i)] It is \emph{local} meaning that a vertex $v$ can only directly exchange information with its neighbours
\item[(ii)] It is \emph{stable}, meaning that if a vertex $v$ and all its neighbours $N(v)$ are the same colour, then under $\mathcal{P}$, $v$ will not change colour in the next step
\end{description}
\end{definition}

\vspace{5 mm}

\begin{theorem}\label{LB}
Suppose $G \in \mathcal{G}_n(\textbf{d})$ is typical with effective minimum degree $d$. Suppose initially each vertex of $G$ is red independently with probability $\alpha \in (0,\frac{1}{2})$, and is blue otherwise. For any local-stable protocol $\mathcal{P}$, the following holds:  With probability $1-e^{-\Omega(n^{1-o(1)})}$, at time step $(1-o(1))\log_d\log_d n$, $\mathcal{P}$ will not have reached consensus on the initial majority. 
\end{theorem}
\begin{proof}
Suppose $v$ is $d$-tree-regular. The probability all vertices in $G[v,h]$ are initially red is $\alpha^K$ where $K = 1+\frac{d}{d-2}[(d-1)^h-1]$.
Since $G$ is typical, there exist $n^{1-O(1/\log\log n)}$ 
$d$-tree-regular vertices (Definition \ref{typicalDef}) so there will be $\Omega\left(n^{1-O(1/\log\log n)}/\ell^{2\omega}\right)$ non-intersecting $d$-regular trees of depth $h$ (recall the definition of $\ell$ from equation \eqref{ell}). Therefore, the probability that at least one of these is initially all red is at least 
\begin{eqnarray}
1-(1-\alpha^{K})^{\Omega\left(n^{1-O(1/\log\log n)}/\ell^{2\omega}\right)} &\geq& 1-\exp\left\{-\Omega\left(\frac{\alpha^{K}n^{1-O(1/\log\log n)}}{\ell^{2\omega}}\right)\right\} \nonumber\\
&\geq& 1-\exp\left\{-\Omega\left(\frac{n^{1-O(1/\log\log n)}}{c^{3d^h}\ell^{2\omega}}\right)\right\},\label{fnsnf}
\end{eqnarray}
where $c=\frac{1}{\alpha}>2$. The logarithm of the bracketed expression in \eqref{fnsnf} is 
\begin{align*}
&\brac{1-O\brac{\frac{1}{\log\log n}}}\log n-3d^h\log c-2\omega \log \ell\\
&=\brac{1-O\brac{\frac{1}{\log\log n}}}\log n-3(\log c)\frac{\log n}{(\log\log n) \log \th}-4C(\log \log n)^2-2C(\log B)\log\log n\\
&=\brac{1-O\brac{\frac{1}{\log\log n}}}\log n.
\end{align*}
Thus \eqref{fnsnf} is $1-e^{-\Omega(n^{1-o(1)})}$.

By the locality and stability conditions, $G[v,h]$ being initially all red means it requires at least $h$ time steps until $v$ can become blue. The theorem follows by Proposition \ref{BlueProp}. 
\end{proof}

\subsection{The upper bounds}

We first formally define the protocol. Time is indexed by the non-negative integers $t=0, 1, 2, \ldots$. For a given graph $G$, let $X_t^{\mathcal{P}}(v)$ be the indicator function for vertex $v$ being blue at time $t$ under protocol $\mathcal{P}$, i.e., $X_t^{\mathcal{P}}(v)=1$ iff $v$ is blue at time step $t$ when running protocol $\mathcal{P}$. For a positive integer $k$ and a vertex $v$, define $v(k)=\min\left\{k,2\left\lfloor\frac{d(v)-1}{2}\right\rfloor+1\right\}$. Thus, $v(k)$ is the minimum of $k$ and the largest odd number not greater than $d(v)$. Below we assume $k$ is odd.

\vspace{5 mm}

\begin{definition}[$k$-choice Local Majority Protocol $\mathcal{MP}^k$]
At time step $t+1$, each vertex $v \in V$ randomly picks a $v(k)$-subset $N_v(t+1)$ uniformly from the set of $\binom{d(v)}{v(k)}$ possible subsets. $v$ then assumes at time $t+1$ the majority colour at time $t$ of the vertices in $N_v(t+1)$. More formally, 
\[
X_{t+1}^{\mathcal{MP}^k}(v)=\mathbf{1}_{\left\{\brac{\sum_{w \in N_v(t+1)}X_{t}^{\mathcal{MP}^k}(w)} > \frac{v(k)}{2}\right\}}.
\]
\end{definition}

For real $\alpha$ and integer $\nu\geq 2$ define
\begin{equation}
f(\nu, \alpha)= \left[\left(1+\frac{1}{\sqrt{2\nu}}\right)2\right]^{\frac{1}{\nu-1}}4\alpha(1-\alpha). \label{condFunc}
\end{equation}

In this section we shall prove the following:

\vspace{5 mm}

\begin{theorem}\label{MPUB}
Let  $G \in \mathcal{G}_n(\textbf{d})$ be typical with  effective minimum degree $d$.

Suppose initially each vertex of $G$ is red independently with probability $\alpha \in (0, \frac{1}{2})$, and is otherwise blue. 

\textbf{case $5\leq k \leq d$:} Let $\nu=\frac{k-1}{2}$ and suppose $f\brac{\nu, \alpha}<\beta$ for some constant $\beta<1$. 
Then for any constant $\varepsilon>0$, with probability $1-n^{-\Omega((\log n)^{\varepsilon/2})}$,  $\mathcal{MP}^{k}$ will have reached consensus on the initial majority by time step $\frac{1+\varepsilon}{\log_k \nu}\log_k\log_k n$.

\textbf{case $k>d$:} Let $d^o$ be the largest odd number not greater than $d$, let $\nu=\frac{d^o-1}{2}$ and  suppose $f\brac{\nu, \alpha}<\beta$ for some constant $\beta<1$. Then for any constant $\varepsilon>0$, with probability $1-n^{-\Omega((\log n)^{\varepsilon/2})}$,  $\mathcal{MP}^{k}$ will have reached consensus on the initial majority by time step $\frac{1+\varepsilon}{\log_{d^o} \nu}\log_{d^o}\log_{d^o} n$.
\end{theorem}

Observe that $k$ is not required to be a constant; it is allowed to be a function of $n$ which goes to infinity. 

Consequent of Theorem \ref{LB} and Theorem \ref{MPUB}, $\mathcal{MP}^k$ is asymptotically optimal when $k=d$ (in fact, when $d/k=O(1))$. Note $\frac{1+\varepsilon}{\log_k \nu}$ is at most $3(1+\varepsilon)$ by the assumption $k\geq 5$, and it is $(1+o(1))(1+\varepsilon)$ if $k \rightarrow \infty$ with $n$. Thus, when $k=d\rightarrow \infty$ the upper bound is within factor $1+\varepsilon$ of the lower bound, for arbitrarily small constant $\varepsilon>0$.

An immediate corollary concerns random regular graphs:
\\

\begin{corollary}
Let $G \in \mathcal{G}_n(d)$ be drawn uniformly at random from the set of simple, connected $d$-regular graphs on $n$ vertices, where $5 \leq d =o(\sqrt{\log n})$. Let $5 \leq k \leq d$ be odd. 
 Suppose initially each vertex of $G$ is red independently with probability $\alpha \in (0,\frac{1}{2})$, and is otherwise blue. If $f\brac{\frac{k-1}{2}, \alpha}<\beta$ for some constant $\beta<1$, then with high probability,  $\mathcal{MP}^{k}$ will reach consensus on the initial majority within $\Theta(\log_k \log_k n)$ steps. 
\end{corollary}

For regular graphs, we can get a stronger error probability at the expense of time. For simplicty, we will only state it for the case $k=d$,  with $d$ odd.
\\

\begin{theorem}\label{regThm}
Suppose $G \in \mathcal{G}_n(d)$ is typical regular with (odd) effective minimum degree $d$ . Suppose initially each vertex of $G$ is red independently with probability $\alpha \in (0,\frac{1}{2})$, and is otherwise blue. If $f\brac{\frac{d-1}{2}, \alpha}<\beta$ for some constant $\beta<1$, then for some constants $c>1$ and $0<\varepsilon<1$, with probability $1-O\brac{c^{-n^{\varepsilon}}}$, by time $O(\log n)$ $\mathcal{MP}^{d}$ will have reached consensus on the initial majority. 
\end{theorem}

We give a brief outline of the key ideas that we use. Core to our analysis is the tree subgraph. Consider a tree $\mathcal{T}(v)$ with root $v$. Suppose for simplicity that all non-leaf vertices have degree $d+1$, where $d\geq 5$ is odd,  and that the tree is of depth $h$. Set $k=d$. 
Instead of picking from all neighbours, we consider a slightly different protocol where non-leaf vertices only poll their children and take the majority colour from that. 
At $t=0$, each vertex is red independently with probability $\alpha<1/2$. Now consider a vertex $x$ at depth $h-1$, meaning $x$ is parent to leaves. At time $t=1$, $x$ polls its children in  $\mathcal{T}(v)$ and takes the majority colour. $x$'s own colour at time $t=0$ is irrelevant. At time $t=1$, $x$ will be red with probability $p_1=\Pr(\text{Bin}(d,\alpha)>d/2)$, which is  less than $p_0=\alpha$. All the other vertices at depth $h-1$ have the same probability of being red. Furthermore, since vertices only poll their children, those at depth $h-1$ are independent of each other. Thus, from the point of view of vertices at depth $h-2$, at time $t=2$, they poll children, each of which is independently red with probability $p_1$. Hence, at $t=2$, a depth $h-2$ vertex is red with probability  $p_2=\Pr(\text{Bin}(d,p_1)>d/2)$ which is less than $p_1$. Continuing in this manner, the sequence $p_0, p_1, \ldots$ decays exponentially, and by time $t=h$, the root vertex will be sampling children which have a very low probability of being red such that taking the union bound over all such ``tree-like'' vertices $v$, means with high probability none of them are red at time $h$. 

In our analysis, we use this tree local subgraph and analyse a modification of the majority protocol where vertices in the tree make the conservative assumption that their parents are red. This can only do worse in terms of error probability so provides a valid bound. Of course not all vertices in the graphs under consideration are regular and locally tree-like, and we make use of results on structural properties of these graphs to handle those vertices close to cycles or which have small degree vertices in the locality.

For any $v \in V$, there is some integer $s \geq 0$ such that $\mathcal{T} = G[v,s]$ is a tree rooted at $v$. We define the \emph{modified majority protocol} $\mathcal{MMP}^k(v,s)$ with respect to a vertex $v \in V$. Recall $X_t^{\mathcal{MP}^k}(x)=1$ if, under $\mathcal{MP}^k$, $x$ is blue at time $t$ and $0$ if it is red. Let $X_t^{\mathcal{MMP}^k(v,s)}(x)$ be the same for $\mathcal{MMP}^k(v,s)$. 

\vspace{5 mm}

\begin{definition}[$k$-choice Modified Majority Protocol $\mathcal{MMP}^k(v,s)$]
Let $\mathcal{T} = G[v,s]$. At time step $t+1$, each vertex $x \in V$ randomly picks a $x(k)$-subset $N_x(t+1)$ uniformly from the set of $\binom{d(x)}{x(k)}$ possible subsets. If $x \notin \mathcal{T}$ then $x$ becomes at time $t+1$ the majority colour at time $t$ of the vertices in $N_x(t+1)$. More formally, 
\[
X_{t+1}^{\mathcal{MMP}^k(v,s)}(x)=\mathbf{1}_{\left\{\brac{\sum_{y \in N_x(t+1)}X_{t}^{\mathcal{MP}^k}(y)} > x(k)/2\right\}}.
\]

If $x \in \mathcal{T}$ then denote by $\text{Par}(x)$ the parent of $x$ in $\mathcal{T}$. At time $t+1$, $x$ becomes the majority colour at time $t$ of the vertices in $N_x(t+1)$, with the added assumption that $\text{Par}(x)$ was red at time $t$. More formally, 
\[
X_{t+1}^{\mathcal{MMP}^k(v,s)}(x)=\mathbf{1}_{\left\{\brac{\sum_{y \in N_x(t+1) \setminus \{\text{Par}(x)\}}X_{t}^{\mathcal{MMP}^k(v,s)}(y)} > x(k)/2\right\}}.
\]
\end{definition}
 
Thus, $\mathcal{MMP}$ is the same as $\mathcal{MP}$ except that vertices in $\mathcal{T}$ effectively make the conservative assumption that a parent is picked and it is red. This will help in getting an upper bound on the probability of red.

In the next lemma, we couple $\mathcal{MMP}^k$ and $\mathcal{MP}^k$ to show how the former bounds the latter. In order to do so, we make use of the fact that the randomness of the system is not affected by the actions of the protocols. To reiterate, given a graph $G$, there are two sources of randomness; there is the initial random assignment of colours, and there is the sequence of choices of neighbours $(N_v(t))_{t=1}^{\infty}$ each vertex $v$ makes. Thus, the Cartesian product $\Omega$ of the possible initial colourings with each of the infinite sequences of neighbour choices each vertex creates a probability space, where an element $\sigma \in \Omega$ is a particular initial colouring of the vertices and, for each vertex $v$, a particular infinite sequence of neighbour choices made by $v$. The next lemma compares $\mathcal{MMP}^k$ and $\mathcal{MP}^k$ under the same $\sigma \in \Omega$.

\vspace{5 mm}

\begin{lemma}\label{MPUBLemma1}
For a graph $G$, a vertex $v \in G$ and $s \geq 1$,  suppose $\mathcal{T} = G[v,s]$ is a tree. Fix $\sigma \in \Omega$ and consider $\mathcal{MP}^k$ and $\mathcal{MMP}^k(v,s)$ under this $\sigma$. For all $t\geq 0$,  we have  $X_{t}^{\mathcal{MMP}^k(v,s)}(x) \leq X_{t}^{\mathcal{MP}^k}(x)$ for every vertex $x \in G$.
\end{lemma}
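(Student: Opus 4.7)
The plan is to induct on $t$, using the fact that under a fixed $\sigma \in \Omega$ both protocols share the same initial colouring and the same sequences of neighbour choices $N_x(t+1)$, so that the only differences between the two dynamics are (i) which indicator sequence, MMP or MP, is fed into the majority test, and (ii) the artificial assumption of a red parent imposed on vertices of $\mathcal{T}$ by MMP. I expect no serious obstacle: everything reduces to the monotonicity of the threshold indicator $\mathbf{1}_{\{S>k(x)/2\}}$ in $S$, combined with the observation that the MMP vote tally is always dominated by the MP vote tally.

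For the base case $t=0$, the coordinate of $\sigma$ specifying the initial colouring is shared by both protocols, so $X_0^{\mathcal{MMP}^k(v,s)}(x)=X_0^{\mathcal{MP}^k}(x)$ for every $x$, and in particular $X_0^{\mathcal{MMP}^k(v,s)}(x)\leq X_0^{\mathcal{MP}^k}(x)$. For the inductive step, assume $X_t^{\mathcal{MMP}^k(v,s)}(y)\leq X_t^{\mathcal{MP}^k}(y)$ for every $y\in V(G)$. Fix a vertex $x$ and let $N_x(t+1)$ be the $k(x)$-subset that $x$ samples under $\sigma$ (the same under both protocols). Write
\[
S^{\mathcal{MP}}=\sum_{y\in N_x(t+1)}X_t^{\mathcal{MP}^k}(y),\qquad S^{\mathcal{MMP}}=\sum_{y\in N_x(t+1)\setminus\{\mathrm{Par}(x)\}}X_t^{\mathcal{MMP}^k(v,s)}(y),
\]
where the set $\{\mathrm{Par}(x)\}$ is interpreted as empty when $x\notin\mathcal{T}$ (or when $x$ is the root of $\mathcal{T}$, in which case it has no parent to assume red). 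By the inductive hypothesis $X_t^{\mathcal{MMP}^k(v,s)}(y)\leq X_t^{\mathcal{MP}^k}(y)$ term-by-term, and removing a non-negative summand only decreases a sum, so
\[
S^{\mathcal{MMP}}\ \leq\ \sum_{y\in N_x(t+1)}X_t^{\mathcal{MMP}^k(v,s)}(y)\ \leq\ S^{\mathcal{MP}}.
\]

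Now split into two cases according to the definition of $\mathcal{MMP}^k(v,s)$. If $x\notin\mathcal{T}$, then $X_{t+1}^{\mathcal{MMP}^k(v,s)}(x)=\mathbf{1}_{\{S^{\mathcal{MMP}}>k(x)/2\}}$ and $X_{t+1}^{\mathcal{MP}^k}(x)=\mathbf{1}_{\{S^{\mathcal{MP}}>k(x)/2\}}$, and since the threshold indicator is monotone non-decreasing in its argument, $S^{\mathcal{MMP}}\leq S^{\mathcal{MP}}$ gives $X_{t+1}^{\mathcal{MMP}^k(v,s)}(x)\leq X_{t+1}^{\mathcal{MP}^k}(x)$. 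If $x\in\mathcal{T}$, exactly the same inequality of indicators applies, because MMP's definition for tree vertices excludes $\mathrm{Par}(x)$ from the sum (effectively counting its colour as red, contributing $0$ to the blue count), while MP still sums over all of $N_x(t+1)$; the inequality $S^{\mathcal{MMP}}\leq S^{\mathcal{MP}}$ derived above is precisely what is needed.

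Since $x$ was arbitrary, the inductive step is complete and the lemma follows. The argument is essentially a coupling/monotonicity argument: under identical randomness $\sigma$, the only effects of the MMP modification are to replace the parent's true colour by red (which never increases the blue tally) and to propagate a dynamic that is already pointwise red-heavier than MP, so the dominance is preserved at every step.
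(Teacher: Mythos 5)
Your proof is correct and follows essentially the same route as the paper's: induction on $t$, with the base case given by the shared initial colouring under $\sigma$, and the inductive step combining term-by-term dominance with the fact that dropping the parent's (non-negative) summand only decreases the blue tally, then invoking monotonicity of the threshold indicator. The only cosmetic difference is that you chain the two inequalities in the opposite order and explicitly note the empty-parent convention for the root, neither of which changes the argument.
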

\begin{proof}
Below we will forgo `$k$' and`$(v,s)$' in the superscripts. $N_x(t)$ the set of neighbours chosen by vertex $x$ at time $t$ under $\sigma$.
 
We argue by induction on $t$. Clearly $X_{0}^{\mathcal{MMP}}(x) = X_{0}^{\mathcal{MP}}(x)$ for every $x$. Suppose $X_{t}^{\mathcal{MMP}}(x) \leq X_{t}^{\mathcal{MP}}(x)$ for every $x$. If $x \notin \mathcal{T}$ then 
\[
X_{t+1}^{\mathcal{MMP}}(x)=\mathbf{1}_{\{\sum_{y \in N_x(t+1)}X_{t}^{\mathcal{MMP}}(y) > \frac{x(k)}{2}\}} \leq \mathbf{1}_{\{\sum_{y \in N_x(t+1)}X_{t}^{\mathcal{MP}}(y) > \frac{x(k)}{2}\}}=X_{t+1}^{\mathcal{MP}}(x)
\]
If $x \in \mathcal{T}$ then   
\[
\sum_{y \in N_x(t+1) \setminus \{\text{Par}(x)\}}X_{t}^{\mathcal{MMP}}(y) \leq \sum_{y \in N_x(t+1) \setminus \{\text{Par}(x)\}}X_{t}^{\mathcal{MP}}(y) \leq \sum_{y \in N_x(t+1)}X_{t}^{\mathcal{MP}}(y),
\]
so
\begin{eqnarray*}
X_{t+1}^{\mathcal{MMP}}(x)&=&\mathbf{1}_{\{\sum_{y \in N_x(t+1) \setminus \{\text{Par}(x)\}}X_{t}^{\mathcal{MMP}}(y) > \frac{x(k)}{2}\}}\\
&\leq& \mathbf{1}_{\{\sum_{y \in N_x(t+1)}X_{t}^{\mathcal{MP}}(y) > \frac{x(k)}{2}\}}\\
 &=& X_{t+1}^{\mathcal{MP}}(x).
\end{eqnarray*}
\end{proof}

\begin{corollary}\label{probCor}
Suppose $G[v,s]$ is a tree. Then 
\[
\Pr(X_{s}^{\mathcal{MP}^k}(v)=0) \leq \Pr(X_{s}^{\mathcal{MMP}^k(v,s)}(v)=0).
\]
\end{corollary}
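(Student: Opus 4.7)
The plan is to derive the corollary as an essentially immediate consequence of Lemma \ref{MPUBLemma1} via a pointwise-on-$\sigma$ coupling argument. Recall that both processes are defined on the common probability space $\Omega$ consisting of initial colourings together with the infinite sequences of neighbour-choices $(N_v(t))_{t\geq 1}$. Lemma \ref{MPUBLemma1} gives a sample-path domination: for every fixed $\sigma \in \Omega$ and every vertex $x$, $X_s^{\mathcal{MMP}^k(v,s)}(x) \leq X_s^{\mathcal{MP}^k}(x)$.

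First I would specialise this inequality to $x = v$ at time $t = s$. Since $X_s^{\mathcal{MP}^k}(v), X_s^{\mathcal{MMP}^k(v,s)}(v) \in \{0,1\}$ are indicators (with value $1$ meaning blue and $0$ meaning red), the inequality
\[
X_s^{\mathcal{MMP}^k(v,s)}(v) \leq X_s^{\mathcal{MP}^k}(v)
\]
forces the implication: if $X_s^{\mathcal{MP}^k}(v) = 0$ then $X_s^{\mathcal{MMP}^k(v,s)}(v) = 0$. Translating this implication into set containment of events inside $\Omega$, we get
\[
\bigl\{\sigma : X_s^{\mathcal{MP}^k}(v)(\sigma) = 0\bigr\} \subseteq \bigl\{\sigma : X_s^{\mathcal{MMP}^k(v,s)}(v)(\sigma) = 0\bigr\}.
\]

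Taking probabilities of both sides (with respect to the common measure on $\Omega$) and using monotonicity of $\Pr$ yields
\[
\Pr\bigl(X_s^{\mathcal{MP}^k}(v) = 0\bigr) \leq \Pr\bigl(X_s^{\mathcal{MMP}^k(v,s)}(v) = 0\bigr),
\]
which is exactly the corollary. There is no real obstacle here; the content is packaged entirely in Lemma \ref{MPUBLemma1}, and the corollary merely records the useful consequence that $\mathcal{MMP}$ furnishes an upper bound on the red-probability under $\mathcal{MP}$, justifying its later use as a conservative surrogate in the analysis of the tree subgraph.
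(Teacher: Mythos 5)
Your proof is correct and follows exactly the route the paper intends: Corollary \ref{probCor} is stated as an immediate consequence of the sample-path domination in Lemma \ref{MPUBLemma1}, and specialising that pointwise inequality to $x=v$ at time $s$ and passing from event containment to probabilities is precisely the argument. Nothing is missing.
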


\vspace{3 mm}

Let
\[
\omega'(n,k)= \log_k \log_k n.
\]
\vspace{3 mm}

We re-iterate that both $k$ and $d$ may be functions of $n$ that go to infinity:
\\

\begin{lemma}\label{problemma}
Let  $G \in \mathcal{G}_n(\textbf{d})$ be typical with  effective minimum degree $d$. Let $k$ be odd with $5 \leq k \leq d$ and let $\nu=\frac{k-1}{2}$. Let $\varepsilon$ be any positive constant and let $A=\frac{1+\varepsilon}{log_k \nu}$. Let $\omega'=\omega'(n,k)$ and suppose $G[v, A\omega']$ is a tree with each non-leaf vertex having degree at least $d$. Suppose each vertex in $G$ is initially red independently with probability $\a \in (0,\frac{1}{2})$, and is otherwise blue. If $f(\n, \alpha)<\beta$ for some constant $\beta<1$, then 
\[
\Pr(X_{A\omega'}^{\mathcal{MP}^k}(v)=0) = n^{-\Omega((\log n)^{\varepsilon/2})}.
\]
\end{lemma}
\begin{proof}
We bound $\Pr(X_{A\omega'}^{\mathcal{MMP}^k(v,s)}(v)=0)$ and use Corollary \ref{probCor}. For convenience, we sometimes forgo notation that is obvious. 

For a vertex $x$, let $d(v,x)$ be the distance of $x$ from $v$ in $G$, and for $t \geq 0$, let $p_t(x)=\Pr(X_{t}^{\mathcal{MMP}}(x)=0)$. For any vertex $x$, $p_0(x)=\Pr(X_{0}^{\mathcal{MMP}}(x)=0)=\alpha$, and in particular, this holds for any vertex $x$ such that $d(v,x)=A\omega'$.

Now for $x$ with $d(v,x)=A\omega'-1$, 
\begin{equation}
p_1(x) =\Pr\left(\sum_{y \in N_x(1) \setminus \{\text{Par}(x)\}}X_{0}^{\mathcal{MMP}}(y) \leq \nu\right) \leq \sum_{i=\nu}^{2\nu}\binom{2\nu}{i}\alpha^i(1-\alpha)^{2\nu-i}. \label{eqp1}
\end{equation}

Since $\alpha<\frac{1}{2}$,
\begin{eqnarray*}
\sum_{i=\nu}^{2\nu}\binom{2\nu}{i}\alpha^i(1-\alpha)^{2\nu-i} &\leq& \alpha^{\nu}(1-\alpha)^{\nu}\sum_{i=\nu}^{2\nu}\binom{2\nu}{i}\\
&=&\alpha^\nu(1-\alpha)^\nu \brac{\frac{1}{2}2^{2\nu}+\frac{1}{2}\binom{2\nu}{\nu}},
\end{eqnarray*}
and using the inequality $\binom{2n}{n}\leq \frac{2^{2n}}{\sqrt{2n}}$, we have  $p_1(x) \leq \frac{1}{2}(1+\frac{1}{\sqrt{2\nu}})(4\alpha(1-\alpha))^\nu$. 

Assume for $t< A\omega'$ and all $x$ such that $d(v,x)=A\omega'-t$,
\[
p_{t}(x) \leq	\frac{1}{4}\left[\left(1+\frac{1}{\sqrt{2\nu}}\right)2\right]^{\sum_{i=0}^{t-1}\nu^i}\left(4\alpha(1-\alpha)\right)^{\nu^t}, 
\]
and define $p_t$ to be the RHS of the above inequality. 

Then for $t+1$ and all $x$ such that $d(v,x)=A\omega'-t-1$,
\begin{eqnarray*}
p_{t+1}(x)&\leq&\sum_{i=\nu}^{2\nu}\binom{2\nu}{i}p_t^i(1-p_t)^{2\nu-i}\\
&\leq& \frac{1}{2}\left(1+\frac{1}{\sqrt{2\nu}}\right)\left(4p_t(1-p_t)\right)^\nu\\
&\leq& \frac{1}{2}\left(1+\frac{1}{\sqrt{2\nu}}\right)\left(4p_t\right)^\nu\\
&\leq& \frac{1}{2}\left(1+\frac{1}{\sqrt{2\nu}}\right)\left(4\frac{1}{4}\left[\left(1+\frac{1}{\sqrt{2\nu}}\right)2\right]^{\sum_{i=0}^{t-1}\nu^i}\left(4\alpha(1-\alpha)\right)^{\nu^t}\right)^\nu\\
&=& \frac{1}{4}\left[\left(1+\frac{1}{\sqrt{2\nu}}\right)2\right]^{\sum_{i=0}^{t}\nu^i}\left(4\alpha(1-\alpha)\right)^{\nu^{t+1}}\\
&=&p_{t+1}.
\end{eqnarray*}

Hence for any $t \leq A\omega'$ and all $x$ such that $d(v,x)=A\omega'-t$,
\[
p_t(x) \leq \frac{1}{4}\left(\left[\left(1+\frac{1}{\sqrt{2\nu}}\right)2\right]^{\frac{1}{\nu-1}}4\alpha(1-\alpha)\right)^{\nu^t}=\frac{1}{4}\brac{f(\nu, \alpha)}^{\nu^t}.
\] 
In particular, when $t=A\omega'$, $\nu^t= \nu^{A\omega'}=(\log_k n)^{A\log_k \nu}$, and by the condition $f(\n, \a)<\beta$,
\begin{eqnarray}
\Pr(X_{A\omega'}^{\mathcal{MMP}}(v)=0) &\leq&  \frac{1}{4}\beta^{(\log_k n)^{A\log_k \nu}}\nonumber \\
&=&\frac{1}{4} n^{-\log_k(\frac{1}{\beta})(\log_k n)^{A\log_k \nu-1}}.\label{2sn}
\end{eqnarray}

Now $\log_k \nu=\log _k(\frac{k-1}{2})>0.43$ since we assume $k \geq 5$.  
Hence, if $A=\frac{1+\varepsilon}{\log_k \nu}$ where $\varepsilon>0$ is a constant, then the positive part of the exponent in \eqref{2sn} is 
\begin{eqnarray*}
\log_k\brac{\frac{1}{\beta}}(\log_k n)^{\varepsilon}&=& \log\brac{\frac{1}{\beta}}\frac{(\log n)^{\varepsilon}}{(\log k)^{1+\varepsilon}}\\ &=&\Omega\brac{\frac{(\log n)^{\varepsilon}}{(\log \log n)^{1+\varepsilon}}}\\
&=&\Omega((\log n)^{\varepsilon/2}), 
\end{eqnarray*}
where the second equality holds because $k \leq d=o(\sqrt{\log n})$. Thus, \eqref{2sn} is $n^{-\Omega((\log n)^{\varepsilon/2})}$ and applying Corollary \ref{probCor} completes the proof.
\end{proof}

The above lemma deals with vertices $v$ for which $G[v, A\omega']$ is a tree and each non-leaf vertex in this tree has degree at least $d$.  We are left to deal with vertices $v$ for which $G[v, A\omega']$ contains a cycle or a non-leaf vertex with degree less than $d$. 

\vspace{5 mm}

\begin{lemma}
\label{nontreelemma}
 Lemma \ref{problemma} holds for $G[v,A\omega']$ when $G[v, A\omega']$ contains a cycle or a non-leaf vertex with degree less than $d$.
\end{lemma}
\begin{proof}
Oberve first of all that $C$ in the definition of $\omega$ is an arbitrarily large constant. Hence, we can assume $A\omega' \leq \omega$. 

In Lemma \ref{lightCPC} \textbf{(c)} says all vertices within distance $2\omega+1$ of a small cycle $\mathcal{C}_1$ are light, so any other small cycle $\mathcal{C}_2$ within $2\omega+1$ either connects to $\mathcal{C}_1$ via a small light path or $\mathcal{C}_1$ and $\mathcal{C}_2$ intersect. The former case is precluded by \textbf{(a)} and the latter by \textbf{(b)} of the same lemma. Therefore, no pair of small cycles is within distance $2\omega+1$ of each other. Hence, if for some $v$, $G[v,A\omega']$ is not a tree, then for some (unique) cycle $\mathcal{C}$, $v$ is either on $\mathcal{C}$ or there is a unique small path from $v$ to $\mathcal{C}$. Consider the latter case. Suppose $x \in N(v)$ is on the small path. We may assume $x$ always to be red and since $v$ has degree at least $d$ (by Lemma \ref{lightCPC} \textbf{(c)}), the bound in \eqref{2sn} holds. 

Now suppose $v$ is on the cycle $\mathcal{C}$. Suppose $\{x,y\}$ are the neighbours of $v$ on $\mathcal{C}$. Each vertex $u \in N(v) \setminus \{x,y\} $ has degree at least $d$ (by Lemma \ref{lightCPC} \textbf{(c)}) and can assume $v$ is always red to get the bound in \eqref{2sn}. Thus by time $A\omega'$ all of $N(v) \setminus \{x,y\}$ will be blue and in the next step they will out-vote $\{x,y\}$ if $d-2>2$. 

We now deal with little vertices. By Lemma \ref{lightCPC} \textbf{(d)} there can be at most one vertex $u \in G[v,A\omega']$ such that $d(u)<d$. If $u \equiv v$ then by the above argument all of $N(v)$ will be blue by time $A\omega'$ and so $v$ will be in the next step. Otherwise, there is a unique path from $v$ to $u$ which can be cut off, and by the above argument the bound holds. 
\end{proof}

We now address the case $k>d^o$ where $d^o$ is the largest odd number not greater than $d$. Intuitively, it would  seem that having a higher degree could only help the root vertex become the the initial majority, since a larger number of child vertices are being sampled. Indeed, the following  formally justifies this intuition (proof in appendix):
\\

\begin{proposition}\label{binprop}
Let $N$ be a natural number and $p \in (0,\frac{1}{2})$. Then 
\[
\Pr\brac{\operatorname{Bin}(2N,p)\geq N} \geq \Pr\brac{\operatorname{Bin}(2N+2,p)\geq N+1}. 
\]
\end{proposition}

As such, if $\mathcal{MMP}^k$ is run with $k>d^o$, the probability the root is red is at most the probability for $\mathcal{MMP}^{d^o}$. Thus, for $k>d^o$, we may apply Lemma \ref{problemma} setting $k=d^o$ and arrive at a correct conclusion.   Why then not derive a similar result directly for $k>d^o$?  On the assumption that the degrees of vertices in the tree are at least $k+1$, and $k =O(\log n)$, one may do so, getting a blue root vertex within  $O(\log_k \log_k n)$ steps.  However, the result would not hold for all vertices. The effective minimum degree $d$ is a natural upper barrier since it defines the structural result given in Lemma \ref{dltlLB}, which we apply to get the lower bound. Thus, even if some vertices become blue more quickly in time $O(\log_k \log_k n)$, there will be some which require $\Omega(\log_{d^o} \log_{d^o} n)$ steps. 

\begin{proof}[\textbf{Proof of Theorem \ref{MPUB}}]
For $5 \leq k \leq d$, using Lemmas \ref{problemma} and \ref{nontreelemma}, apply a union bound to all $n$ vertices in $G$. The case $k>d^o$ follows by the above discussion. Finally, we apply Proposition \ref{BlueProp}. 
\end{proof}

\begin{proof}[\textbf{Proof of Theorem \ref{regThm}}]
This follows by the same reasoning as the proof of Theorem \ref{MPUB}, except that now we use trees of depth $L_1$, giving a time $L_1$ and error bound
\eqref{2sn} of 
\begin{equation}
\Pr(X_{L_1}^{\mathcal{MMP}}(v)=0) \leq  \frac{1}{4}\beta^{\nu^{L_1}}=O(c^{-n^\varepsilon}) 
\end{equation}
for some pair of constants $c>1$ and $0<\varepsilon<1$.
\end{proof}



\section{Erd\H{o}s-R\'enyi Random Graphs}\label{ERRG}
We study $\mathcal{G}_n(p)$ for $p=\frac{c\log n}{n}$, where $c>2+\epsilon$ and $\epsilon>0$ is a constant.  We shall take $k$ to be an odd constant, though it should not be too difficult to extend the results for values which go to infinity with $n$.

Theorem \ref{ERThSmall} below gives a probability for a trial whereby a graph $G$ from $\mathcal{G}(n,p)$ is picked then the protocol run on it, i.e., there is randomness in the actual graph the protocol is run on. This is in contrast to Theorems \ref{LB},  \ref{MPUB} and \ref{regThm} where there was no randomness in picking the graph; it merely had to be a graph from the typical set. When $p=1$, however, $\mathcal{G}(n,p)$ will be the complete graph with probability $1$. In this case, it is interesting to compare this result with those given in \cite{Cruise}; the error probability in Theorem \ref{ERThSmall} is not as strong as the exponentially small one given in \cite{Cruise} but the convergence time is $O(\log\log n)$ compared to their $O(\log n)$.

\vspace{5 mm}

\begin{theorem}\label{ERThSmall}
Let $\epsilon>0$ be a constant and $p \geq (2+\epsilon)\log n/n$.  Pick a graph $G \in \mathcal{G}(n,p)$ and suppose each vertex in $G$ is initially red independently with probability $\a \in (0,\frac{1}{2})$, and is otherwise blue. 

Let $k\geq 5$ be an odd constant and let $\n=\frac{k-1}{2}$. If $f(\n, \alpha)<\beta$ for some constant $\beta<1$, then for any positive constant $\varepsilon$, the following holds with probability $1-n^{-\Omega(1)}$: By time $\frac{1+\varepsilon}{\log _k\nu}\log_k\log_k n$ $\mathcal{MP}^{k}$ will have reached consensus on the initial majority.
\end{theorem}
\begin{proof}
We use the  Chernoff bound given in Theorem \ref{Chernoff}, in the appendix. If $X$ is the degree of a particular vertex $v$, $X \sim \text{Bin}(n-1, p)$. Let $\epsilon_1$ be a constant such that $1>\epsilon_1>\sqrt{2/(2+\epsilon)}$. Then 
\[
\Pr(X<(1-\epsilon_1)(2+\epsilon)\log n) \leq \exp\left(-\frac{\epsilon_1^2}{2}(2+\epsilon)\log n\right)=n^{-(1+c_1)}, 
\]
for some constant $c_1>0$. Taking the union bound over all vertices means that with probability at least $1-n^{-c_1}$ each vertex has degree at least $c_2\log n$ for some constant $c_2>0$. 

Our approach is as follows: We fix a vertex $v$ and choose a positive integer $T$.  We work backwards in time from $t=T$ looking at the $k$ vertices polled by $v$ at time $T$, then look at the $k^2$ vertices polled by those vertices at time $T-1$ and so on. We shall show that if If $T$ is not too large, the polling decisions will have a tree structure, or will be a tree plus one extra edge. We shall see that we can choose $T$ to be $A\omega'$, where $A=\frac{1+\varepsilon}{\log _k\nu}$ and $\omega'=\log_k \log_k n$. We then calculate the probability of $v$ being blue at time $A\omega'$ based on the same principle as the previous section.

For a vertex $v$, denote the set of $k$ vertices chosen by $v$ at time $t$ by $N_v(t)$. Consider a ``root'' vertex $v$. We build a (multi)graph $\mathcal{T}=\mathcal{T}(v)$ with the following algorithm.  The data structure $map$  associates a vertex with a value.  $E(\mathcal{T})$ is the edge multiset of $\mathcal{T}$. The vertex set $V(\mathcal{T})$ is not a multiset. 

\begin{algorithm}
$\mathcal{T} \longleftarrow \{v\}$ \;
$\text{map}(v) \longleftarrow 0$ \;
\For{$i \leftarrow 0$ \KwTo $A\omega'-1$}{
	\ForEach{$x\in V(\mathcal{T})$}{
		\If{$\text{map}(x) = i$}{
			$V(\mathcal{T}) \longleftarrow V(\mathcal{T}) \cup N_x(A\omega'-i)$\;
			\ForEach{$y \in N_x(A\omega'-i)$}{
				$\text{map}(y) \longleftarrow i+1$\;
				$E(\mathcal{T})\longleftarrow E(\mathcal{T}) \cup (x,y)$\;
			}
		}
	}
}
\caption{$\mathcal{T}$-BUILD}
\end{algorithm}

Before two cycles have formed, no vertex has exposed more than $2k+2$ edges. Given that $k$ is a constant and each vertex has degree $\Omega(\log n)$, this exposure is negligible. Since $|V(\mathcal{T})| \leq k^{A\omega'+1}$,  the probability, at any particular step before two cycles are formed, of connecting to $\mathcal{T}$ is $O\brac{k^{A\omega'+1}/n}$. Therefore, the probability that at any point in its construction $\mathcal{T}$ is picked twice is at most 
\[
O\brac{\frac{k^{2(A\omega'+1)}}{n^2}} \times \binom{k^{A\omega'+1}}{2}=O\brac{\frac{k^{4(A\omega'+1)}}{n^2}}= O\brac{\frac{(\log_k n)^{4A}}{n^2}}
\]    
where we have used the assumption that $k$ is constant in the last equality. Taking the union bound over all $n$ originating vertices, we see that with probability $1-n^{-\Omega(1)}$, $\mathcal{T}(v)$ has at most one cycle for every $v$ in $G$.

We can now apply the same reasoning as in the proof of Theorem \ref{MPUB}. If for a $v$ $\mathcal{T}(v)$ has a cycle $\mathcal{C}$ , then if $v$ is not on $\mathcal{C}$ we can cut off the (unique) branch containing $\mathcal{C}$ and the bounds in \eqref{2sn} holds, as per the proof of Lemma \ref{nontreelemma}. If $v$ is on $\mathcal{C}$, then since $k-2>2$, the two neighbours  of $v$ on $\mathcal{C}$ are out-voted by those not on $\mathcal{C}$, again, as per the proof of Lemma \ref{nontreelemma}.

\end{proof}

\section{Conclusion and Further Work}\label{Concs}
We have studied a variant of the local majority protocol on two types of graphs: those with a prescribed degree sequence, and Erd\H{o}s-R\'enyi random graphs in the connected regime. We have shown that when each vertex starts red independently with probability less than half, then the process will converge to the initial global majority with high probability, and will do so in sublogarithmic time. We have also demonstrated lower bound for convergence time that is within a small factor of the upper bound for the prescribed degree sequence graphs.

There are a number of possible directions for further research. Recall that nice degree sequences required the effective minimum degree $d$ to be at least $5$. Since we require minimum degree $3$ to ensure connectivity, it would appear to be a suitable target for reducing the bound on $d$. However, if there are $\Omega(n)$ vertices with degree $3$, then there could be cycles of degree $3$ vertices with some of those cycles being entirely the minority colour. Since they can't change colour, consensus can't be reached on the initial global majority. In this case, an alternate protocol such as choosing neighbours uniformly at random with replacement could be an interesting line of investigation. For the protocol of this paper, $d=4$ remains a valid target. 

Another direction is studying the process on other graph models. In particular, those  exhibiting inhomogeneous structure such as power-law distributions (see, e.g., \cite{Remco} for an up to date treatment of such models).

Lastly, one may consider a setting in which the initial distribution of colours is not necessarily random, and investigate which distributions allow for a convergence to the initial global majority.  


\section{Acknowledgement}
We thank Colin Cooper for discussions which led to improvements in the presentation of our results. Additionally, we thank one of the referees for their thorough feedback and suggestions. 

This work was partially supported by QNRF grant NPRP 09-1150-2-448.

\bibliographystyle{model1-num-names}

\begin{thebibliography}{00}
\bibitem{CovDS}  {\sc M. Abdullah, C. Cooper, A. M. Frieze}, {\it Cover Time of a Random Graph with Given Degree Sequence},
Discrete Mathematics, Volume 312, Issue 21,  pp. 3146-3163, 2012 

\bibitem{AlFi} {\sc D. Aldous and J. Fill}, {\em Reversible Markov Chains and
Random Walks on Graphs}, (in preparation) \url{http://stat-www.berkeley.edu/pub/users/aldous/RWG/book.html}

\bibitem{Benezit} {\sc F. Benezit, P. Thiran, M. Vetterli} {\it The Distributed Multiple Voting Problem}, 
IEEE Journal of Selected Topics in Signal Processing, vol. 5, num. 4, p. 791-804, 2011

\bibitem{CFR-Mult}  {\sc C. Cooper, A. M. Frieze, T. Radzik}, {\it Multiple Random Walks in Random Regular Graphs}, 
SIAM Journal on Discrete Mathematics, Volume 23, Issue 4,  pp. 1738-1761, 2009 

\bibitem{ColinVoting} {\sc C. Cooper, R.  Els\"{a}sser, T. Radzik}, {\it The Power of Two Choices in Distributed Voting},
Proc. of The 41st International Colloquium on Automata, Languages, and Programming (ICALP), 2014 (to appear)

\bibitem{Cruise} {\sc J. Cruise,  A. Ganesh}, {\it Probabilistic Consensus via Polling and Majority Rules},
Proc. of Allerton Conference, 2010

\bibitem{Draief}  {\sc M. Draief, M. Vojnovic}, {\it Convergence Speed of Binary Interval Consensus}, 
SIAM Journal on Control and Optimization, 2012

\bibitem{ConcMeasure} {\sc D. P. Dubhashi, A. Panconesi}, {\em Concentration of Measure for the Analysis of Randomized Algorithms}, 
 Cambridge University Press, 2009

\bibitem{Peleg} {\sc Y. Hassin, D. Peleg}, {\it Distributed Probabilistic Polling and Applications to Proportionate Agreement},
Information and Computation 171, 248-268, 2001

\bibitem{Remco}{\sc R. van der Hofstad}, {\em Random Graphs and Complex Networks} \url{http://www.win.tue.nl/~rhofstad/NotesRGCN.pdf}, 2013

\bibitem{Janson} {\sc 	S. Janson, T. Luczak and A. Rucinski}, {\em Random Graphs}, Wiley, 2000

\bibitem{Montanari} {\sc Y. Kanoria,  A. Montanari}, {\it Majority Dynamics on Trees and the Dynamic Cavity Method}, 
Annals of Applied Probability, 2010


\bibitem{Mossel} {\sc E. Mossel , J. Neeman, O. Tamuz}, {\it Majority Dynamics and Aggregation of Information in Social Networks}
2012, arXiv:1207.0893 

\bibitem{Milan}  {\sc E.Perron, D. Vasudevan, M. Vojnovic}, {\it Using Three States for Binary Consensus on Complete Graphs},
IEEE Infocom 2009, IEEE Communications Society, 2009

\bibitem{Shang} {\sc S. Shang, P.W. Cuff, S.R. Kulkarni, P. Hui}, {\it An Upper Bound on the Convergence Time for Distributed Binary Consensus},  15th International Conference on Information Fusion (FUSION), 2012
\end{thebibliography}

\section{Appendix}
\begin{proof}[Proof of Proposition \ref{binprop}]
Let $X$ and $Y$ be independent random variables with distributions  $\operatorname{Bin}(2N,p)$ and $\operatorname{Bin}(2,p)$ respectively, and let $Z=X+Y$. Then $\mathbf{1}_{\{X \geq N\}}=\mathbf{1}_{\{Z \geq N+1\}}$ except when  $X=N$ and  $Y=0$,  or $X=N-1$ and $Y=2$. The former case occurs with probability $p_a=\binom{2N}{N}p^N(1-p)^{N+2}$ and the latter with $p_b=\binom{2N}{N-1}p^{N+1}(1-p)^{N+1}$. Observe $p_a \geq p_b$ if and only if $p \leq \frac{N+1}{2N+1}$, which is always the case when $p < \frac{1}{2}$. 
\end{proof}

\vspace{5mm}

\begin{theorem}[\cite{ConcMeasure}]\label{Chernoff}
Let $X=\sum_{i=1}^nX_i$ where the $X_i$'s are independent random variables distributed in $[0,1]$. For $\epsilon>0$, 
\begin{equation*}
\Pr(X<(1-\epsilon)\E[X]) \leq \exp\left(-\frac{\epsilon^2}{2}\E[X]\right).
\end{equation*}
\end{theorem}

\end{document}